\theoremstyle{plain}
\newtheorem{theorem}{Theorem}
\theoremstyle{definition}
\newtheorem{remark}{Remark}
\begin{document}

%\jvol{00} \jnum{00} \jyear{2013} \jmonth{November}

%\articletype{GUIDE}

\title{Adaptive Control of Uncertain Pure-feedback Nonlinear Systems}

\author{Mingzhe Hou$^{a}$$^{\ast}$\thanks{$^\ast$Corresponding author. Email: hithyt@hit.edu.cn
\vspace{6pt}} Zongquan Deng$^{b}$ and Guangren Duan$^{a}$ \\\vspace{6pt} $^{a}${\em{Center for Control Theory and Guidance Technology, Harbin
Institute of Technology, Harbin 150001, China}};
$^{b}${\em{School of Mechatronic Engineering, Harbin
Institute of Technology, Harbin 150001, China}}\\\received{Sep., 2016} }

\maketitle

\begin{abstract}
A novel adaptive control approach is proposed to solve the globally asymptotic state stabilization problem
for uncertain pure-feedback nonlinear systems which can be transformed into the pseudo-affine form.
The pseudo-affine pure-feedback nonlinear system under consideration is with non-linearly parameterised uncertainties
and possibly unknown control coefficients. Based on the parameter separation technique, a backstepping controller is designed by adopting the adaptive high gain idea. The rigorous stability analysis shows that the proposed controller could guarantee, for any initial system condition,
boundedness of the closed-loop signals and globally asymptotic stabilization of the state.
A numerical and a realistic examples are employed to demonstrate the effectiveness of the proposed control method.

\begin{keywords}Nonlinear systems; pure-feedback systems; uncertain
systems; adaptive control; global state stabilization
\end{keywords}

\end{abstract}

\section{Introduction}
During the past several decades, control of nonlinear systems has received considerable attention
and a series of powerful control methods have been proposed, such as sliding mode control
\citep{Slotine1991}, feedback linearization \citep{Sastry1989},
backstepping \citep{Kanellakopoulos1991}, and so on. Among these methods, the backstepping technique,
which is a constructive, recursive, Lyapunov-based control design approach,
is one of the most popular control design tools to deal with a large
class of nonlinear systems with uncertainties, especially with unmatched
uncertainties. Nonlinear systems using the backstepping technique are of
lower triangular form which can be broadly classified into two kinds:
strict-feedback form and pure-feedback form. In the case of
strict-feedback systems, a great deal of progress has been achieved to
develop backstepping controllers. It is first proposed for
nonlinear systems with linearly parameterized uncertainties \citep
{Kanellakopoulos1991, Krstic1995}, and then extended to handle
non-linearly parameterized uncertainties \citep{Lin2002a, Lin2002b, Niu2005}. By
introducing Nussbaum functions, it is also applied to
nonlinear systems with unknown control directions \citep{Ye1999}.
The relevant engineering applications
have also been widely discussed, for example, from
wheeled mobile robots \citep{Mnif2005} to aircrafts \citep{Lungu 2013} to spacecrafts \citep{Ali2010}.

Compared with this progress, relatively fewer results are available for control
of pure-feedback systems. Pure-feedback systems, which have no
affine appearance of the state variables to be used as virtual controls
and/or the actual control, is more representative than
strict-feedback systems. Many practical systems are of pure-feedback form,
such as aircraft control systems \citep{Hunt1997}, biochemical processes \citep
{Krstic1995}, mechanical systems \citep{Ferrara2000}, and so on. Cascade and non-affine
properties of pure-feedback systems make it rather difficult to find
explicit virtual controls and the actual control. Therefore, control of
non-affine pure-feedback nonlinear systems is a meaningful and challenging
issue and has become a hot topic in the control field in recent
years (See \cite{Liu2016, Tong2016} and references therein). Most of these
results are approximation-based approaches which fully exploit the universal
approximation ability of neural networks or fuzzy logic systems.
By utilizing the Mean Value Theorem, the original non-affine
system is transformed to the quasi-affine form. Subsequently, the
backstepping technique is employed for the control design. Generally speaking,
it is difficult to obtain the explicit expressions of the ideal
virtual or actual controllers although their existence is guaranteed by the
Implicit Function Theorem. Hence, to approximate them,
estimators are constructed based on fuzzy logic systems \citep{Gao2013,
Li2015, Yu2013, Zhang2010} or  neural networks \citep{Shen2014, Sun2013,
Wang2006, Wang2011, Wang2013}. In \cite{Yoo2012} and \cite{Gao2012}, the singular perturbation
theory is also employed to estimate the ideal controllers.

Approximation-based approaches may suffer some problems. Take the approaches based on neural networks for example.
When the number of the neural network nodes increases to improve the approximation ability,
the number of adjustable parameters will become enormous. Accordingly, the online learning time will become very large \citep{Tong2016}.
In addition, the obtained results usually hold non-globally. To avoid suchlike problems, some approximation-free approaches are proposed
recently. In \cite{Liu2014}, a backstepping control algorithm is proposed
for a class of pure-feedback nonlinear systems by viewing $f_{i}(\bar{x}_{i+1})$ in stead of
$x_{i+1}$ as the virtual control variable and by adding anintegrator.
This result is further extended to the case where there exist linearly parameterised uncertainties \citep{Liu2016}.
In \cite{Tong2016}, by adopting the barrier Lyapunov function technique,
an adaptive control technique is developed for a class of pure-feedback
systems with linearly parameterised uncertainties and full state constraints.

In this paper, the globally asymptotic state stabilization (GASS) problem is
considered for a class of pure-feedback systems which can be written into the pseudo-affine form.
The  pseudo-affine pure-feedback system under consideration has non-linearly parameterised uncertainties
and possibly unknown control coefficients. For this kind of systems,
if the lower bound of every control coefficient is exactly known,
the GASR problem can be solved by the non-smooth or the smooth control schemes given in \cite{Lin2002a, Lin2002b}.
When their lower bounds are unknown, a possible way is to adopt and improve the adaptive control approach proposed in \cite{Sun2007}. If so, $2n$ adaptive laws are needed to estimate the unknown parameters.
In this paper, motivated by the work in \cite{Ye1999} and the high gain idea given in \cite{Lei2006},
a novel adaptive backstepping controller is proposed based on the parameter separation technique \citep{Lin2002a,Lin2002b}.
In the proposed method, only $n$ adaptive laws are needed. Since no estimators are needed to approximate the
ideal controllers, drawbacks of the approximation-based approaches can be avoided.
The rigorous stability analysis shows that the proposed controller could guarantee bounded closed-loop signals and globally
asymptotic state stabilization.

\section{Problem Formulation}
Consider pure-feedback nonlinear systems which can be written into the following pseudo-affine form
\begin{eqnarray}
\left\{
\begin{array}{l}
\dot{x}_{i} =\varphi _{i}(\bar{x}_{i},\theta )+g_{i}(\bar{x}_{i+1},\theta)x_{i+1}, (i=1,\cdots,n-1) \\
\dot{x}_{n} =\varphi _{n}(x,\theta )+g_n(x,\theta,u)u\\
y=x_1
\end{array}
\right.  \label{System0}
\end{eqnarray}
where $x=[
\begin{array}{ccc}
x_{1} & \cdots & x_{n}
\end{array}
]^{\mathrm{T}}$ is the state, $\bar{x}_{i}=[
\begin{array}{ccc}
x_{1} & \cdots & x_{i}
\end{array}
]^{\mathrm{T}}$, $u$ is the input, $y$ is the output, $\theta :\Re _{\geq 0}\rightarrow \Re ^{m}$ is a
bounded, uncertain time-varying piecewise continuous parameter or disturbance vector,
$\varphi_{i}:\Re^{i+m}\rightarrow \Re $ and $g_{i}:\Re^{i+1+m}\rightarrow \Re $ are
$\mathcal{C}^{1}$ functions, and $\varphi_{i}(0_{i\times 1},\theta)=0$.
For notational convenience, we introduce $x_{n+1}:=u$.

\begin{remark} { For general pure-feedback systems
\begin{equation}
\left\{
\begin{array}{l}
\dot{x}_{i}=f_{i}(\bar{x}_{i+1},\theta), (i=1,\cdots,n-1) \\
\dot{x}_{n}=f_{n}(x,\theta ,u)
\end{array}
\right.  \label{System1}
\end{equation}
where $f_{i}:\Re^{i+1+s}\rightarrow \Re $ are $\mathcal{C}^{2}$ functions satisfying $
f_{i}(0_{i\times 1},\theta,0)=0$, by the Mean Value Theorem, we
have
\begin{eqnarray}
f_{i}(\bar{x}_{i+1},\theta)=f_{i}(\bar{x}_{i},0,\theta)+\frac{
\partial f_{i}(\bar{x}_{i},\lambda _{i}x_{i+1},\theta)}{\partial x_{i+1}}
x_{i+1} \label{Dec1}
\end{eqnarray}
where $\lambda _{i}\in (0,1)$ and is a constant or a $\bar{x}_{i+1}$-dependent variable.
Define $\varphi _{i}(\bar{x}_{i},\theta)=f_{i}(\bar{x}_{i},0,\theta)$ and $
g_{i}(\bar{x}_{i+1},\theta)=\frac{\partial f_{i}(\bar{x}
_{i},\lambda _{i}x_{i+1},\theta)}{\partial x_{i+1}}$.
It is easy to know that $\varphi_{i}$ and $g_{i}$ are $\mathcal{C}^{1}$ functions, and $\varphi_{i}(0_{i\times 1},\theta)=0$.
Hence, system (\ref{System1}) can be transformed to the form of (\ref{System0}).}
\end{remark}

In this paper, we are interested in the case where $\varphi _{i}$ and $g_{i}$ satisfy the following two assumptions, respectively.

\textbf{Assumption 1:} There exist a set of unknown constants $
c_{i} $ and known $\mathcal{C}^{1}$ functions $\rho _{i}:\Re ^{i}\rightarrow \Re _{\geq
0}$ such that
\begin{eqnarray}
\left\vert\varphi _{i}(\bar{x}_{i},\theta )\right\vert\leq c_{i}\rho
_{i}(\bar{x}_{i})\sum_{j=1}^{i}\left\vert x_{j}\right\vert, (i=1,\cdots,n)  \label{Assumption1}
\end{eqnarray}
\textbf{Assumption 2:} The signs of $g_{i}$ are known.
Without loss of generality, we assume that they are all positive. In addition,
we assume that there exist a set of unknown constants $b_{i}>0$, $B_{i}>0$, and
known $\mathcal{C}^{1}$ functions $\phi _{i}:\Re ^{i+1}\rightarrow \Re _{> 0}$ such that
\begin{equation}
0<b_{i}\leq g_{i}(\bar{x}_{i+1},\theta)\leq B_{i}\phi _{i}(\bar{x}_{i+1}), (i=1,\cdots ,n-1)  \label{Assumption2}
\end{equation}
and $g_{n}(x,\theta,u)\geq b_{n}>0$.

The control problem to be solved is stated as follows.

\textbf{GASS Problem:} Consider the
uncertain pure-feedback system (\ref{System0}) under Assumptions 1 and 2.
Design a $\mathcal{C}^{1}$ state feedback controller $u$ such that all signals in
the resulting closed-loop system are bounded on $[0,\infty )$, furthermore,
globally asymptotic stabilization of the state $x$ is achieved, \emph{i.e.} $
\lim_{t\rightarrow \infty }x(t)=0$ for all $x(0)\in \Re ^{n}$.

\begin{remark}
The condition $g_{i}\geq b_{i}>0$ actually presents a sufficient global
controllability condition for system (\ref{System0}).
In addition, since $\varphi _{i}(\bar{x}_{i},\theta )$ is a $\mathcal{C}^{1}$ function,
according to \cite{Nijmeijer1990}, it can be rewritten as $\varphi _{i}(\bar{
x}_{i},\theta )=\sum\nolimits_{j=1}^{i} \xi _{ij}(\bar{x}_{i},\theta )x_{j}$ with $
\xi _{ij}$ being continuous functions. According to the parameter separation
technique introduced in \cite{Lin2002a} and \cite{Lin2002b}, it is reasonable to assume that,
$\forall j=1,\cdots,i$, $\left|\xi_{ij}(\bar{x}_{i},\theta )\right| \leq c_{i}\rho _{i}(\bar{x}_{i})$, which implies that
$\left\vert\varphi _{i}( \bar{x}_{i},\theta )\right\vert\leq c_{i}\rho _{i}(\bar{x}
_{i})\sum\nolimits_{j=1}^{i}\left\vert x_{j}\right\vert$, and
$g_{i}(\bar{x}_{i+1},\theta)\leq B_{i}\phi _{i}(\bar{x}_{i+1})$, where $c_{i}$ and $B_{i}$
are unknown parameters, and $\rho _{i}$ and $\phi _{i}$
are known smooth functions. In fact, assumptions like Assumptions 1 and 2 are frequently
used in the literature, such as \cite{Lin2002a}, \cite{Lin2002b}, \cite{Sun2007}, and so on.
It is worth pointing out that the control coefficients $g_{i}$ themselves could be unknown functions under Assumption 2.
\end{remark}

\section{Main results}

In this section, we shall first present an adaptive control scheme, and
subsequently we shall prove that it leads to the solution to the GASR
problem for system (\ref{System0}).

\subsection{Control Scheme}

The proposed adaptive control scheme is given in a step-by-step way as
follows.

\textbf{\emph{Step 0:}} Introduce the following coordinate transformation
\begin{eqnarray}
z_{1} &=&x_{1}  \label{Coordinates1} \\
z_{i} &=&x_{i}-\alpha _{i-1},i=2,\cdots ,n  \label{Coordinates2}
\end{eqnarray}
where $\alpha _{i},i=1,\cdots ,n-1$ are the virtual control laws to be determined.
For notational convenience, we introduce $\alpha_{n}:=u $.
Virtual control laws $\alpha _{i},i=1,\cdots ,n-1$ and the actual control law $\alpha_{n}$ ,{\it i.e.}, $u$,
are constructed as
\begin{eqnarray}
\alpha_{i}(\bar{x}_i, \bar{k}_i)&=&-\mu _{i}k_{i}\psi _{i}^{2}(\bar{z}_{i},\bar{k}
_{i-1})z_{i}, (i=1,\cdots ,n)  \label{Virtual}
\end{eqnarray}
where $k_{i},i=1,\cdots ,n$ are update laws given by
\begin{eqnarray}
\dot{k}_{i} &=&\gamma _{i}\psi _{i}^{2}(\bar{z}_{i},\bar{k}
_{i-1})z_{i}^{2}
\doteq\omega _{i}, k_{i}(0)=k_{i0}\in \Re _{>0}\label{Update}
\end{eqnarray}
where $\psi _{i}$ are $\mathcal{C}^{1}$ functions to be determined in the
following $i$th step, $\mu _{i}\in \Re _{>0}$ and $\gamma _{i}\in \Re _{>0}$ are design parameters.
Roughly speaking, by choosing bigger $\mu _{i}$ and $\gamma _{i}$, the convergence speed of the states can be improved.
However, this may lead to larger control magnitude. Therefore, a tradeoff is required when determining these design parameters in applications.

\textbf{\emph{Step 1:}} Start with
\begin{eqnarray}
\dot{z}_{1} &=&\dot{x}_{1}
=\varphi _{1}+g_{1}x_{2}
\label{z1dot}
\end{eqnarray}
Define $V_{1}=\frac{1}{2}z_{1}^{2}$. According to (\ref{Coordinates2})-(\ref{Update}) and Assumptions 1 and 2, the time derivative of $V_{1}$ is such that
\begin{eqnarray}
\dot{V}_{1} &=&z_{1}\dot{z}_{1}
\leq c_{1}\rho _{1}(z_{1})z_{1}^{2}+g_{1}z_{1}(z_{2}+\alpha _{1})  \nonumber \\
&\leq &-b_{1}\mu _{1}k_{1}\psi _{1}^{2}z_{1}^{2}+c_{1}\rho
_{1}(z_{1})z_{1}^{2}+z_{1}^{2}+B_{1}^{2}\phi _{1}^{2}(\bar{x}_{2})z_{2}^{2}\label{V1dot0}
\end{eqnarray}
Choose $\psi _{1}$ to be any $\mathcal{C}^{1}$ function satisfying
\begin{equation}
\rho _{1}(z_{1})+1\leq \psi _{1}(z_{1})  \label{Psi1}
\end{equation}
then
\begin{eqnarray}
\dot{V}_{1} &\leq &-b_{1}\mu _{1}k_{1}\psi _{1}^{2}z_{1}^{2}+\beta _{1}\psi
_{1}^{2}z_{1}^{2}+z_{1}^{2}+\sigma _{1}\phi _{1}^{2}(\bar{x}_{2})z_{2}^{2}
\nonumber \\
&=&-\frac{1}{\gamma _{1}}b_{1}\mu _{1}k_{1}\dot{k}_{1}+\frac{1}{\gamma _{1}}
\beta _{1}\dot{k}_{1}+z_{1}^{2}+\sigma _{1}\phi _{1}^{2}(\bar{z}
_{2},k_1)z_{2}^{2}  \label{V1dot}
\end{eqnarray}
where $\beta _{1} =c_{1}$ and $\sigma _{1} =B_{1}^{2}$
are unknown constants.

\textbf{\emph{Step }$\textbf{\emph{i}}$\textbf{\ }$(2\leq i\leq n)$}: The derivative of $
z_{i}$ is
\begin{eqnarray}
\dot{z}_{i} &=&\dot{x}_{i}-\dot{\alpha}_{i-1}
=\varphi _{i}+g_{i}x_{i+1}
-\sum_{j=1}^{i-1}\left[ \frac{\partial
\alpha _{i-1}}{\partial x_{j}}\left( \varphi _{j}+g_{j}x_{j+1}\right) +\frac{
\partial \alpha _{i-1}}{\partial k_{j}}\omega _{j}\right]  \label{zidot}
\end{eqnarray}
Define $V_{i}=\frac{1}{2}z_{i}^{2}$. Bearing (\ref{Coordinates2})-(\ref{Update}) and Assumptions 1 and 2 in mind, we can obtain
\begin{eqnarray}
\dot{V}_{i} &=&z_{i}\dot{z}_{i}
=z_{i}\varphi _{i}+g_{i}z_{i}x_{i+1}
-z_{i}\sum_{j=1}^{i-1}\left[ \frac{\partial \alpha _{i-1}}{
\partial x_{j}}\left( \varphi _{j}+g_{j}x_{j+1}\right) +\frac{\partial
\alpha _{i-1}}{\partial k_{j}}\omega _{j}\right]   \nonumber \\
&\leq &g_{i}z_{i}x_{i+1}+c_{i}\rho _{i}(\bar{x}_{i})\left\vert
z_{i}\right\vert \sum_{j=1}^{i}\left\vert x_{j}\right\vert\nonumber\\
&&+\left\vert z_{i}\right\vert
\sum_{j=1}^{i-1}\left[ \left\vert
\frac{\partial \alpha _{i-1}}{\partial x_{j}}\right\vert \left( c_{j}\rho
_{j}(\bar{x}_{j})\sum_{l=1}^{j}\left\vert x_{l}\right\vert
+g_{j}\left\vert x_{j+1}\right\vert \right)+\gamma _{j}\left\vert \frac{\partial \alpha _{i-1}}{\partial k_{j}}
\right\vert \psi _{j}^{2}(\bar{z}_{j},\bar{k}_{j-1})z_{j}^{2}\right]   \nonumber\\
&\leq &g_{i}z_{i}x_{i+1}+\vartheta _{i}\left\vert z_{i}\right\vert \rho _{i}(
\bar{x}_{i})\sum_{j=1}^{i}\left\vert x_{j}\right\vert \nonumber \\
&&+\vartheta _{i}\left\vert z_{i}\right\vert
\sum_{j=1}^{i-1}\left[
\left\vert \frac{\partial \alpha _{i-1}}{\partial x_{j}}\right\vert \left(
\rho _{j}(\bar{x}_{j})\sum_{l=1}^{j}\left\vert
x_{l}\right\vert +\phi _{j}(\bar{x}_{j+1})\left\vert x_{j+1}\right\vert
\right)+\gamma _{j}\left\vert \frac{\partial \alpha _{i-1}}{\partial k_{j}}
\right\vert \psi _{j}^{2}(\bar{z}_{j},\bar{k}_{j-1})z_{j}^{2}\right]   \label{Vidot0}
\end{eqnarray}%
where $\vartheta _{i}=\max (1,c_{1},c_{2},\cdots ,c_{i},B_{1},\cdots ,B_{i})$.
According to (\ref{Coordinates2}) and (\ref{Virtual}), we have that, for $2\leq q\leq i$,
\begin{eqnarray}
\left\vert x_{q}\right\vert \leq \left\vert z_{q}\right\vert +\mu _{q-1}k_{q-1}\psi _{q-1}^{2}(\bar{z}
_{q},\bar{k}_{q-2})\left\vert z_{q-1}\right\vert  \label{xq}
\end{eqnarray}
Hence,
\begin{equation}
\dot{V}_{i}\leq g_{i}z_{i}x_{i+1}+\vartheta _{i}\left\vert z_{i}\right\vert
\eta _{i}(\bar{z}_{i},\bar{k}_{i-1})\sum_{j=1}^{i}\left\vert
z_{j}\right\vert  \label{Vidot2}
\end{equation}
where $\eta _{i}(\bar{z}_{i},\bar{k}_{i-1})$ is a known nonnegative-valued
function. Now we choose $\psi _{i}$ to be any $\mathcal{C}^{1}$ function satisfying
\begin{equation}
\psi _{i}(\bar{z}_{i},\bar{k}_{i-1})\geq \eta _{i}(\bar{z}_{i},\bar{k}
_{i-1})+\phi _{i-1}(\bar{z}_{i},\bar{k}_{i-1})+1  \label{psii}
\end{equation}
Then, we can obtain
\begin{eqnarray}
\dot{V}_{i} &\leq &g_{i}z_{i}(z_{i+1}+\alpha _{i})+\vartheta _{i}\left\vert
z_{i}\right\vert \eta _{i}(\bar{z}_{i},\bar{k}_{i-1})\sum_{j=1}^{i}
\left\vert z_{j}\right\vert   \nonumber \\
&\leq &-b_{i}\mu _{i}k_{i}\psi _{i}^{2}(\bar{z}_{i},\bar{k}
_{i-1})z_{i}^{2}+z_{i}^{2}+\sigma _{i}\phi _{i}^{2}(\bar{x}
_{i+1})z_{i+1}^{2}+i\vartheta _{i}^{2}\psi _{i}^{2}(\bar{x}_{i},\bar{k}
_{i-1})z_{i}^{2}+\sum_{j=1}^{i}z_{j}^{2}  \nonumber \\
&\leq &-b_{i}\mu _{i}k_{i}\psi _{i}^{2}(\bar{z}_{i},\bar{k}
_{i-1})z_{i}^{2}+\beta _{i}\psi _{i}^{2}(\bar{z}_{i},\bar{k}
_{i-1})z_{i}^{2}+\sigma _{i}\phi _{i}^{2}(\bar{z}_{i+1},\bar{k}
_{i})z_{i+1}^{2}+\sum_{j=1}^{i}z_{j}^{2}  \nonumber \\
&=&-\frac{1}{\gamma _{i}}b_{i}\mu _{i}k_{i}\dot{k}_{i}+\frac{1}{\gamma _{i}}
\beta _{i}\dot{k}_{i}+\sigma _{i}\phi _{i}^{2}(\bar{z}_{i+1},\bar{k}
_{i})z_{i+1}^{2}+\sum_{j=1}^{i}z_{j}^{2}  \label{Vidot}
\end{eqnarray}
where $z_{n+1}=x_{n+1}-\alpha _{n}=0$, $\beta _{i} =i\vartheta _{i}^{2}+1$ and $\sigma _{i} =B_{i}^{2}$
are unknown constants.

\subsection{Stability Analysis}

\begin{theorem}
Suppose that Assumptions 1 and 2 are satisfied and that the above-proposed
design procedure is applied to system (\ref{System0}), then for all $x(0)\in
\Re ^{n}$ and fixed $k_{i}(0)=k_{i0}>0$, $i=1,\cdots ,n$, all signals in
the resulting closed-loop system are bounded on $[0,\infty )$, furthermore, $
\lim_{t\rightarrow \infty }x(t)=0$ and $\lim_{t\rightarrow \infty }k_{i}(t)=
k_{i\infty}\in \Re _{>0}$.
\end{theorem}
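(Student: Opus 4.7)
My plan is to construct a composite Lyapunov function $V(t)=\sum_{i=1}^{n} V_{i}(t)$ and combine the step-wise inequalities (\ref{V1dot}) and (\ref{Vidot}) into a single differential inequality in which every error term is absorbed into a perfect derivative with respect to the update laws. The key algebraic observation is that the cross-coupling term $\sigma_{i-1}\phi_{i-1}^{2}z_{i}^{2}$ produced at step $i-1$ is dominated, thanks to the design choice (\ref{psii}), by $\sigma_{i-1}\psi_{i}^{2}z_{i}^{2}$, and that every residual $z_{j}^{2}$ satisfies $z_{j}^{2}\leq\psi_{j}^{2}z_{j}^{2}$ since $\psi_{j}\geq 1$. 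Using $\dot k_{i}=\gamma_{i}\psi_{i}^{2}z_{i}^{2}$ from (\ref{Update}) to rewrite each remaining $\psi_{i}^{2}z_{i}^{2}$ term, I expect to arrive at
\[
\dot V \;\leq\; \sum_{i=1}^{n}\frac{C_{i}-b_{i}\mu_{i}k_{i}}{\gamma_{i}}\,\dot k_{i},
\]
where $C_{i}$ are unknown but finite constants depending on $\beta_{i}$, $\sigma_{i-1}$, and $n$.

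Next I would integrate this inequality on the maximal interval of existence $[0,T_{\max})$. Because the right-hand side is an exact differential in each $k_{i}$, integration produces
\[
V(t)\;\leq\;V(0)+\sum_{i=1}^{n}\frac{1}{\gamma_{i}}\Bigl[C_{i}\bigl(k_{i}(t)-k_{i0}\bigr)-\frac{b_{i}\mu_{i}}{2}\bigl(k_{i}(t)^{2}-k_{i0}^{2}\bigr)\Bigr],
\]
a concave quadratic in each $k_{i}(t)$ with strictly negative leading coefficient. Since $\dot k_{i}\geq 0$ by (\ref{Update}), each $k_{i}$ is non-decreasing; if any $k_{i}$ were unbounded, the right-hand side would eventually fall below $0$, contradicting $V\geq 0$. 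Hence every $k_{i}$ is bounded on $[0,T_{\max})$, so $V$ is bounded, which forces $z$ (and therefore $x$, through (\ref{Coordinates1})--(\ref{Coordinates2})) to be bounded. This rules out finite escape, so $T_{\max}=\infty$, and monotonicity plus boundedness yield $k_{i}(t)\to k_{i\infty}\in(0,\infty)$, where positivity follows from $k_{i}(t)\geq k_{i0}>0$.

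For the asymptotic part, boundedness of $k_{i}$ gives $\int_{0}^{\infty}\dot k_{i}\,dt<\infty$, hence $\int_{0}^{\infty}\psi_{i}^{2}z_{i}^{2}\,dt<\infty$; combined with $\psi_{i}\geq 1$ this yields $z_{i}\in L^{2}$. All closed-loop signals being bounded and the functions $\varphi_{i}$, $g_{i}$, $\alpha_{i-1}$ being $\mathcal{C}^{1}$, each $\dot z_{i}$ in (\ref{zidot}) is bounded, so $z_{i}^{2}$ is uniformly continuous and Barbalat's lemma yields $z_{i}(t)\to 0$. Convergence $x_{i}\to 0$ then propagates by induction: $x_{1}=z_{1}\to 0$, and since $\alpha_{i-1}=-\mu_{i-1}k_{i-1}\psi_{i-1}^{2}(\bar z_{i-1},\bar k_{i-2})z_{i-1}$ is continuous in its bounded arguments and vanishes whenever $z_{i-1}\to 0$, the identity $x_{i}=z_{i}+\alpha_{i-1}$ gives $x_{i}\to 0$.

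The main obstacle, in my view, is the telescoping carried out in the first paragraph: verifying that the function $\eta_{i}$ introduced in (\ref{Vidot2}) is indeed a \emph{known} nonnegative function of $(\bar z_{i},\bar k_{i-1})$ alone after repeatedly applying the substitution (\ref{xq}) to eliminate every $x_{j}$, and then that the single inequality $\psi_{i}\geq\eta_{i}+\phi_{i-1}+1$ is strong enough to absorb simultaneously the step-$i$ residual $i\vartheta_{i}^{2}\psi_{i}^{2}z_{i}^{2}$ and the leftover $\sigma_{i-1}\phi_{i-1}^{2}z_{i}^{2}$ from step $i-1$. Once this bookkeeping is done cleanly, the quadratic-versus-linear contradiction on $k_{i}$ and the Barbalat argument are standard.
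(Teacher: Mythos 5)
Your proposal is correct and follows essentially the same route as the paper: the same composite Lyapunov function, the same absorption of the residual $z_j^2$ and $\sigma_i\phi_i^2 z_{i+1}^2$ terms via $\psi_j\geq 1$ and (\ref{psii}), the same rewriting through $\dot k_i=\gamma_i\psi_i^2 z_i^2$, the same quadratic-versus-linear contradiction to bound the gains, and the same $L^2$-plus-Barbalat argument for convergence. The only cosmetic difference is that the paper bounds $z_i^2\leq \dot k_i/\gamma_i$ pointwise before integrating, whereas you integrate $\dot k_i$ first; these are equivalent.
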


\begin{proof}
Due to the smoothness of the proposed robust control, the solution of the
closed-loop system has a maximum interval of existence $[0,t_{f})$ where $
t_{f}\in \Re _{>0}$. We will first prove the boundedness of all state
variables on the interval $[0,\infty )$, and then prove the convergence of $
x(t)$ and $k_{i}(t)$. To this end, we define
\begin{equation}
V=\sum_{i=1}^{n}V_{i}=\frac{1}{2}\sum_{i=1}^{n}z_{i}^{2}\geq 0
\label{V}
\end{equation}
It follows from (\ref{Psi1}), (\ref{V1dot}), (\ref{psii}), (\ref{Vidot}) and
the fact $z_{n+1}=0$ that
\begin{eqnarray}
\dot{V} &=&\sum_{i=1}^{n}\dot{V}_{i}
\leq-\sum_{i=1}^{n}\frac{1}{\gamma _{i}}b_{i}\mu _{i}k_{i}\dot{k}
_{i}+\sum_{i=1}^{n}\frac{1}{\gamma _{i}}\beta _{i}\dot{k}_{i}
+\sum_{i=1}^{n-1}\sigma _{i}\phi _{i}^{2}(\bar{z}_{i+1},\bar{k}
_{i})z_{i+1}^{2}+\sum_{i=1}^{n}\sum_{j=1}^{i}z_{j}^{2}\nonumber \\
&\leq &-\sum_{i=1}^{n}\frac{1}{\gamma _{i}}b_{i}\mu _{i}k_{i}\dot{k}
_{i}+\sum_{i=1}^{n}\frac{1}{\gamma _{i}}\beta _{i}\dot{k}_{i}
+\sum_{i=1}^{n-1}\sigma _{i}\psi _{i+1}^{2}(\bar{z}_{i+1},\bar{k
}_{i})z_{i+1}^{2}
+\sum_{i=1}^{n}\sum_{j=1}^{i}\psi
_{j}^{2}(\bar{z}_{j},\bar{k}_{j-1})z_{j}^{2}  \nonumber \\
&\leq &-\sum_{i=1}^{n}\frac{1}{\gamma _{i}}b_{i}\mu _{i}k_{i}\dot{k}
_{i}+\sum_{i=1}^{n}\frac{1}{\gamma _{i}}\beta _{i}\dot{k}_{i}
+\sum_{i=1}^{n-1}\sigma _{i}\frac{1}{\gamma _{i+1}}\dot{k}
_{i+1}+\sum_{i=1}^{n}\sum_{j=1}^{i}\frac{1}{\gamma _{j}}
\dot{k}_{j}  \nonumber \\
&\leq &-\sum_{i=1}^{n}\frac{1}{\gamma _{i}}b_{i}\mu _{i}k_{i}\dot{k}
_{i}+\varepsilon \sum_{i=1}^{n}\frac{1}{\gamma _{i}}\dot{k}
_{i}+\varepsilon \sum_{i=1}^{n}\frac{1}{\gamma _{i}}\dot{k}_{i}
\nonumber \\
&=&-\sum_{i=1}^{n}\frac{1}{\gamma _{i}}b_{i}\mu _{i}k_{i}\dot{k}
_{i}+2\varepsilon \sum_{i=1}^{n}\frac{1}{\gamma _{i}}\dot{k}_{i}
\label{Vdot}
\end{eqnarray}
where $
\varepsilon =\max \left( \beta _{1},\cdots ,\beta _{n},n,n-1+\sigma
_{1},\cdots ,1+\sigma _{n-1}\right)$
is an unknown constant. Integrating inequality (\ref{Vdot}) gives, $\forall t\in \lbrack 0,t_{f})$,
\begin{eqnarray}
V(t)&\leq& -\frac{1}{2}\sum_{i=1}^{n}\frac{1}{\gamma _{i}}
b_{i}\mu _{i}k_{i}^{2}(t)
+2\varepsilon \sum_{i=1}^{n}\frac{1}{
\gamma _{i}}k_{i}(t)+C_{0}  \label{IntV}
\end{eqnarray}
where $C_{0}$ is a constant depending on initial data. This implies that $
k_{i}$ are bounded on $[0,t_{f})$. Otherwise, on taking limit as $
t\rightarrow t_{f}$, the right side of the above inequality would diverge to $
-\infty $, which would yield a contradiction with $V(t)\geq 0$. It immediately follows from (\ref{IntV}) that $V(t)$, and in
turn, $z_{i}(t)$ are bounded. According to (\ref{Coordinates1})-(\ref
{Virtual}), $x_{i}(t)$ are bounded. Therefore, all the state variables
of the closed-loop system are bounded on the interval $[0,t_{f})$, hence, $
t_{f}=\infty $. As a result, the control $u$ which depends on $
x_{i}$ and $k_{i}$ is bounded, which means that $\dot{x}_{i}(t)$ is
bounded. Furthermore, $\dot{k}_{i}(t)$ is bounded according to (\ref
{Update}). Hence, by (\ref{z1dot}) and (\ref{zidot}), $\dot{
z}_{i}(t)$ is bounded.
{ Since $k_{i}(t)$ is bounded on $[0,+\infty)$ and $\dot{k}_{i}\geq 0$,
there exist constants $k_{i\infty}\in \Re _{>0}$ such that $\lim_{t\rightarrow \infty }k_{i}(t)=k_{i\infty}$.
From (\ref{Psi1}) and (\ref{psii}), we have $\psi_i \geq 1$. This, together with (\ref{Update}), implies that $z_{i}^{2}(t) \leq \frac{k_i(t)}{\gamma_i}$. Thus,
\begin{eqnarray}
&&\lim_{t\rightarrow \infty}\int\nolimits_{0}^{t}z_{i}^{2}(\tau )\mathrm{d}\tau
\leq \lim_{t\rightarrow \infty}\int\nolimits_{0}^{t}\frac{\dot{k}_{i}(\tau )}{\gamma_i}\mathrm{d}\tau
=\frac{k_{i\infty}-k_{i0}}{\gamma_i}
\label{Intofzs}
\end{eqnarray}
Therefore, by using Barbalat's Lemma \citep{Tao1997,Hou2010}, we have $\lim_{t\rightarrow \infty }z_i(t)=0$, that is, $\lim_{t\rightarrow \infty }z(t)=0$.
Consequently, from (\ref{Coordinates1})-(\ref{Virtual}) and the boundedness of $k_{i}(t)$, we can get that
$\lim_{t\rightarrow \infty }x(t)=0$.
This completes the proof.
}
\end{proof}

\begin{remark}
The proposed control algorithm is also available if $\phi_i$ and $g_i$ are functions of $x$. Of course,
$\rho _{i}$ and $\phi _{i}$ are still functions of $\bar{x}_{i}$ and $\bar{x}_{i+1}$, respectively.
In this case, system (\ref{System0}) is beyond the lower triangular form.
\end{remark}

\begin{remark}
The intuitive explanation of the proposed method is given as follows. As long as $z_i\neq 0$, $k_i$ will keep on growing.
When $k_i$ become large enough, the uncertainties will be dominated completely and $z_i$ will converge to zero eventually.
In practical applications, if necessary, we could add the following modification to the update law (\ref{Update}):
$\dot{k}_i=0$ if $|z_i|<\delta$, where $\delta$ characterizes the tolerable error range.
In this case, if $|z_i|\geq \delta$, $k_i$ will grow.
When $k_i$ become large enough, the uncertainties will be dominated fully
and $|z_i|$ will decrease ultimately and be kept within the tolerable range. This way could prevent
$k_i$ from increasing unboundedly and reinforce the closed-loop robustness.
\end{remark}

\section{Examples}
In this section, we will give two examples to show the effectiveness
of the obtained results.

\subsection{Numerical Example}
Let us consider the globally asymptotic state regulation of the following system
\begin{equation}
\left\{
\begin{array}{l}
\dot{x}_{1}=\theta _{1}\left(x_{1}+x_{2}+\frac{x_{2}^{3}}{5}\right) \\
\dot{x}_{2}=\theta _{2}\left(x_{1}x_{2}+u+\frac{u^{3}}{7}\right)
\end{array}
\right.   \label{ExaSys}
\end{equation}
where $\theta _{i}$, $i=1,2$ are uncertain time-varying piecewise continuous parameters belonging to the interval $[\underline{\theta}_i, \overline{\theta}_i]$
with $\underline{\theta}_i$ and $\overline{\theta}_i$ being unknown positive constants. System (\ref{ExaSys}) with $\theta _{i}$=1 is
frequently discussed in the literature on control of pure-feedback systems.
To written system (\ref{ExaSys}) into the form of (\ref{System0}), we can set $\varphi
_{1}=\theta _{1}x_{1}$, $g_{1}=\theta _{1}\left( 1+\frac{x_{2}^{2}}{5}
\right) $, $\varphi _{2}=\theta _{2}x_{1}x_{2}$ and $g_{2}=\theta _{2}\left(
1+\frac{u^{2}}{7}\right) $. Let $b_{i}=\underline{\theta}_i$, $c_{i}=B_{i}=\overline{\theta}_i$, where $i=1,2$, $\rho _{1}=1
$, $\phi _{1}=1+\frac{x_{2}^{2}}{5}$, $\rho
_{2}=\frac{1+x_{1}^{2}+x_{2}^{2}}{4}$ and $\phi _{2}=1+\frac{u^{2}}{7}$,
then it is easy to check that Assumptions 1 and 2 are satisfied.

By the design procedure given in Section 3, define $z_{1}=x_{1}$ and $
z_{2}=x_{2}-\alpha _{1}$, the virtual controller and the actual controller can
be respectively constructed as
\begin{eqnarray*}
\alpha _{1} =-\mu _{1}k_{1}\psi _{1}z_{1},\ \
u =-\mu _{2}k_{2}\psi _{2}z_{2}
\end{eqnarray*}
with update laws
\begin{eqnarray*}
\dot{k}_{1} =\gamma _{1}\psi _{1}^{2}z_{1}^{2},\ \
\dot{k}_{2} =\gamma _{2}\psi _{2}^{2}z_{2}^{2}
\end{eqnarray*}
where $\psi _{1}=2$ and $\psi _{2}=(1+2\mu _{1}k_{1})\rho _{2}+8\gamma _{1}\mu_{1}z_{1}^{2}+\phi _{1}+1
+2\left(1+\phi _{1}+2\mu _{1}k_{1}\phi _{1}\right) \mu _{1}k_{1}$. For simulation, we set $\theta _{1}=1$
, $\theta _{2}=2$, $x_{10}=-2$, $x_{20}=3$, and for $i=1,2$, $k_{i0}=0.01$, $\mu _{i}=\gamma _{i}=0.2$.
The simulation results are given in Fig.\ref{fig1}.

\begin{figure}[!htb]
  \centering
  \includegraphics[width=0.9\hsize]{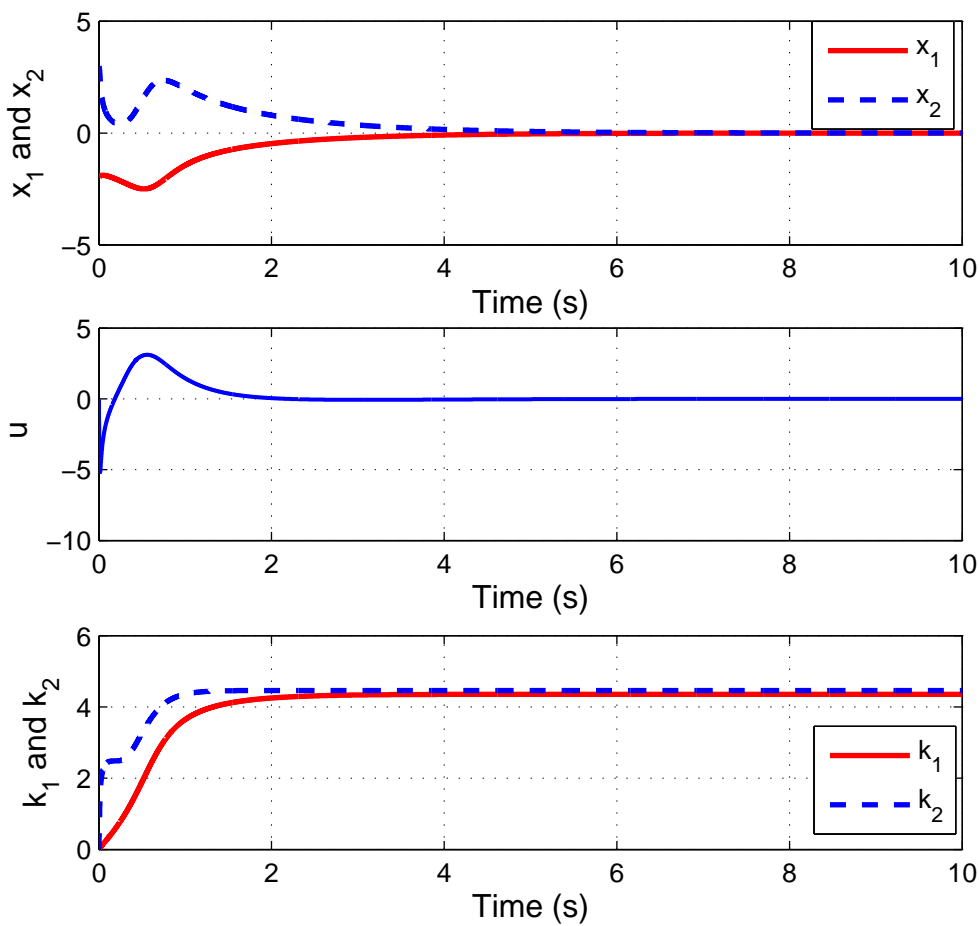}
  \caption{Simulation results for the numerical example}
  \label{fig1}
\end{figure}

\subsection{Realistic Example}

Consider the roll control of axially symmetric Skid-To-Turn (STT)
missiles, for which the simplified mathematical model with actuator dynamics is given by
\begin{eqnarray}
\left\{
\begin{array}{l}
\dot{\gamma}=\omega _{x} \\
\dot{\omega}_{x}=\frac{M_{x}}{J_{x}} \\
\dot{\delta}_{x}=\frac{\delta _{xc}-\delta _{x}}{\tau_a}
\end{array}
\right.  \label{STT}
\end{eqnarray}
where $\gamma$, $\omega _{x}$, $\delta _{x}$ and $\delta _{xc}$ are the roll angle, the rotating rate along the
roll axis, the aileron deflection angle and the aileron deflection angle
command to be determined, respectively. $J_{x}$, $\tau_a$ and $M_{x}$ are the moment of inertia about the roll axis, the time constant of the actuator and the roll moment, respectively. The mathematical expression of $M_{x}$ is given
by $M_{x}=\frac{1}{2}\rho V^{2}slm_{x}$, where $\rho$, $s$, $l$ and $m_{x}$ are the air density, the reference area, the reference length, and  the roll
moment coefficient. Roughly speaking, $m_{x}$ can be viewed as a smooth
nonlinear function of $\alpha$ (angle of attack), $\beta $ (angle of sideslip) and $\delta_{x}$ at some operating point
(Please see \cite{Siouris2004} and \cite{Hou2013}). When designing the roll controller for STT missiles, a
frequently used way is ignoring the impacts from the pitch and the yaw
channels. As a result, $m_{x}$ can be viewed as a smooth nonlinear function of $
\delta_{x}$, denoted by $m_{x}(\delta _{x})$. The nominal value of $
m_{x}(\delta _{x})$ can be determined by experiments and theoretical
calculation. For axially symmetric missiles, $m_{x}(0)=0$. Hence, by the
Mean Value Theorem, we have
\begin{eqnarray}
m_{x}(\delta _{x})&=&m_{x}(\delta _{x})-m_{x}(0)\nonumber \\
&=&\frac{\partial m_{x}(\lambda
\delta _{x})}{\partial \delta _{x}}\delta _{x}\doteq m_{x}^{\delta _{x}}(\lambda
\delta _{x})\delta _{x}  \label{Coefficient}
\end{eqnarray}
where $\lambda \in (0,1)$ and is $\delta_x$-dependent.
From (\ref{STT}) and (\ref{Coefficient}), we have
\begin{eqnarray}
\left\{
\begin{array}{l}
\dot{\gamma}=\omega _{x} \\
\dot{\omega}_{x}=\frac{\rho V^{2}slm_{x}^{\delta _{x}}(\lambda \delta _{x})}{
2J_{x}}\delta _{x} \\
\dot{\delta}_{x}=-\frac{1}{\tau_a}\delta _{x}+\frac{1}{\tau_a}\delta _{xc}
\end{array}
\right.  \label{STT1}
\end{eqnarray}

The objective is to design the aileron deflection angle command $\delta
_{xc} $ near some operating point such that all the states of the closed-loop system
converge to zero asymptotically.

Define
\begin{eqnarray*}
&&x_{1}=\gamma,~x_{2}=\omega _{x},~x_{3}=\delta _{x},~u=\delta _{xc},\nonumber\\
&&\theta_1=\frac{\rho V^{2}sl}{2J_{x}},~\theta_2=\frac{1}{\tau_a},\nonumber\\
&&g_{2}(\delta_{x},\theta_1)=\theta_1m_{x}^{\delta _{x}}(\lambda\delta _{x}),\nonumber\\
&&\varphi _{3}(\delta _{x},\theta_2)=-\theta_2\delta _{x},~g_{3}(\theta_2)=\theta_2,
\end{eqnarray*}
then we have
\begin{eqnarray}
\left\{
\begin{array}{l}
\dot{x}_{1}=x_{2} \\
\dot{x}_{2}=g_{2}(x_{3},\theta_1)x_{3} \\
\dot{x}_{3}=\varphi _{3}(x_{3},\theta_2)+g_{3}(\theta_2)u
\end{array}
\right.   \label{STT2}
\end{eqnarray}

Generally speaking, the function $m_{x}^{\delta _{x}}(\lambda \delta
_{x})=\frac{\partial m_{x}(\lambda \delta _{x})}{\partial \delta _{x}}$ are
unknown, but by experience, we can assume that
\begin{eqnarray*}
0<m\leq m_{x}^{\delta _{x}}(\lambda \delta _{x})\leq M \xi(\delta _{x})
\end{eqnarray*}
where $m$ and $M$ are two unknown constants,
and $\xi(\delta _{x})$ is a positive smooth function determined by the data obtained from experiments and theoretical
calculation. Near the operation point, $\theta_1$ continuously varies within some interval $[\theta_{1m}, \theta_{1M}]$,
where $\theta_{1m}$ and $\theta_{1M}$ are two positive constants, of which the values are not necessary to be known.
In practice, it is not easy to determine the exact value of the time constant of the actuator $\tau_a$.
But we know that it satisfies $\tau_a<0$, that is, $\theta_2<0$.

According to the above analysis, it is not easy to check that Assumptions 1 and 2 are satisfied
if we set $\rho _{1}=0$, $\phi_{1}=1$, $\rho _{2}=0$, $\phi_{2}=\xi$, $\rho_{3}=1$, and $\phi_{3}=1$.
Therefore, we could certainly design the aileron deflection angle command by strictly complying with the procedure given in Section 3.
However, we could simplify the the design procedure due to the special structure of (\ref{STT2}).

\textbf{\emph{Step 0:}} Define
\begin{eqnarray}
z_{1}=x_{1},~z_{2}=x_{2}-\alpha _{1},~z_{3}=x_{3}-\alpha _{2}
\end{eqnarray}
where $\alpha _{i}$ are virtual control laws to be determined.

\textbf{\emph{Step 1:}} Choose
\begin{eqnarray}
\alpha _{1}=-k_1z_1
\end{eqnarray}
where $k_1>0$ is design parameter, then one can obtain
\begin{eqnarray*}
\dot{z}_{1}=-k_1z_1+z_{2}
\end{eqnarray*}
Define $V_{1}=\frac{1}{2}z_1^2$, then one has
\begin{eqnarray*}
\dot{V}_{1}=-k_1z_1^2+z_1z_{2}\leq-\left(k_1-\frac{\epsilon}{4}\right)z_1^2+\frac{1}{\epsilon}z_{2}^2
\end{eqnarray*}
where $\epsilon>0$ is a constant to be determined.

\textbf{\emph{Step 2:}} Choose
\begin{eqnarray}
\alpha_{2}=-\mu _{2}k_{2}z_{2}
\end{eqnarray}
where $k_{2}$ is the update law given by
\begin{eqnarray}
\dot{k}_{2} =\gamma _{2}z_{2}^{2}, k_{2}(0)=k_{20}\in \Re _{>0}
\end{eqnarray}
where $\mu _{2}\in \Re _{>0}$ and $\gamma _{2}\in \Re _{>0}$ are design parameters. Then one has
\begin{eqnarray*}
\dot{z}_{2}=-g_{2}\mu _{2}k_{2}z_{2}+g_{2}z_3-k_1^2z_1+k_1z_{2}
\end{eqnarray*}
Define $V_{2}=\frac{1}{2}z_2^2$, then one has
\begin{eqnarray*}
&&\dot{V}_{2}=-g_{2}\mu _{2}k_{2}z_{2}^2+g_{2}z_2z_3-k_1^2z_1z_2+k_1z_{2}^2\nonumber\\
&&\leq-\frac{b_{2}\mu _{2}}{\gamma _{2}}k_{2}\dot{k}_{2}+B_{2}^2\phi _{2}^2z_3^2+\frac{\epsilon}{4}z_1^2+\left(\frac{k_1^4}{\epsilon}+k_1+\frac{1}{4}\right)z_{2}^2
\end{eqnarray*}

\textbf{\emph{Step 3:}} Choose
\begin{eqnarray}
u=-\mu _{3}k_{3}\psi_3^2z_{3}
\end{eqnarray}
where $k_{3}$ is the update law given by
\begin{eqnarray}
\dot{k}_{3} =\gamma _{3}\psi_3^2z_{3}^{2}, k_{3}(0)=k_{30}\in \Re _{>0}
\end{eqnarray}
where $\psi_3$ is a smooth positive function to be determined, $\mu _{3}\in \Re _{>0}$ and $\gamma _{3}\in \Re _{>0}$ are design parameters. Then one has
\begin{eqnarray*}
\dot{z}_{3}&=&-\theta_2(z_3+\alpha_2)-\theta_2\mu _{3}k_{3}\psi_3^2z_{3}+\mu _{2}\dot{k}_{2}z_{2}\nonumber\\
&&-g_{2}\mu _{2}^2k_{2}^2z_{2}+g_{2}\mu _{2}k_{2}z_3-\mu _{2}k_1^2 k_{2}z_1+\mu _{2}k_1k_{2}z_{2}
\end{eqnarray*}
Define $V_{3}=\frac{1}{2}z_3^2$, then one has
\begin{eqnarray*}
\dot{V}_{3} &=&\theta _{2}\mu _{2}k_{2}z_{2}z_{3}-\theta
_{2}z_{3}^{2}-\theta _{2}\mu _{3}k_{3}\psi _{3}^{2}z_{3}^{2}+\mu _{2}\dot{k}
_{2}z_{2}z_{3}  \nonumber \\
&-&g_{2}\mu _{2}^{2}k_{2}^{2}z_{2}z_{3}+g_{2}\mu _{2}k_{2}z_{3}^{2}-\mu
_{2}k_{1}^{2}k_{2}z_{1}z_{3}+\mu _{2}k_{1}k_{2}z_{2}z_{3}  \nonumber \\
&\leq &z_{2}^{2}+\frac{\epsilon }{4}z_{1}^{2}-\theta _{2}\mu _{3}k_{3}\psi
_{3}^{2}z_{3}^{2}+\left( \theta _{2}^{2}\mu _{2}^{2}k_{2}^{2}+\mu _{2}^{2}
\dot{k}_{2}^{2}\right.   \nonumber \\
&+&\left.B_{2}^{2}\phi _{2}^{2}\mu _{2}^{4}k_{2}^{4}+\frac{1}{4}
B_{2}^{2}\phi _{2}^{2}\mu _{2}^{2}k_{2}^{2}+\frac{\mu
_{2}^{2}k_{1}^{4}k_{2}^{2}}{\epsilon }+\mu_{2}^{2}k_{1}^{2}k_{2}^{2}+1\right) z_{3}^{2}  \nonumber
\end{eqnarray*}
where $\varpi =\max \{B_{2}^{2},B_{2},1\}$.

Define $V=V_1+V_2+V_3$, then one has
\begin{eqnarray*}
\dot{V} &\leq &-\left( k_{1}-\frac{3\epsilon }{4}\right) z_{1}^{2}-\frac{
b_{2}\mu _{2}}{\gamma _{2}}k_{2}\dot{k}_{2}+\varepsilon _{1}\dot{k}_{2}-
\frac{\theta _{2}\mu _{3}}{\gamma _{3}}k_{3}\dot{k}_{3} \\
&&+\varepsilon _{2}\gamma_3\left( \mu _{2}^{2}k_{2}^{2}+\mu _{2}^{2}\dot{k}
_{2}^{2}+\phi _{2}^{2}\mu _{2}^{4}k_{2}^{4}+\frac{1}{4}\phi _{2}^{2}\mu
_{2}^{2}k_{2}^{2}\right.  \\
&&+\left. \frac{\mu _{2}^{2}k_{1}^{4}k_{2}^{2}}{\epsilon }+\mu
_{2}^{2}k_{1}^{2}k_{2}^{2}+\phi _{2}^{2}+1\right) z_{3}^{2}
\end{eqnarray*}
where $\varepsilon_{2}=\frac{1}{\gamma _{2}}\left( \frac{k_{1}^{4}+1}{\epsilon }+k_{1}+\frac{5}{4}\right)$,  $\varepsilon _{3}=\frac{1}{\gamma_3}\max \{B_{2}^{2},\theta _{2}^{2},1\}$. Choose $\psi _{3}=\mu _{2}k_{2}+\mu _{2}\dot{k}_{2}+\phi _{2}\mu
_{2}^{2}k_{2}^{2}+\frac{1}{2}\phi _{2}\mu _{2}k_{2}+\frac{\mu
_{2}k_{1}^{2}k_{2}}{\sqrt{\epsilon }}+\mu _{2}k_{1}k_{2}+\phi _{2}+1$ and $\varepsilon_1=\left(k_{1}-\frac{3\epsilon }{4}\right)>0$, then one has
\begin{eqnarray*}
\dot{V}\leq -\varepsilon_1z_{1}^{2}-\frac{
b_{2}\mu _{2}k_{2}\dot{k}_{2}}{\gamma _{2}}+\varepsilon _{2}\dot{k}_{2}-
\frac{\theta _{2}\mu _{3}k_{3}\dot{k}_{3}}{\gamma _{3}}+\varepsilon _{3}\dot{k}_{3}
\end{eqnarray*}
From this, we can conclude that
\begin{eqnarray*}
V(t)+\varepsilon _{1}\int_{0}\nolimits^{t}z_{1}^{2}(\tau )\mathrm{d}\tau &\leq& -\frac{%
b_{2}\mu _{2}k_{2}^{2}(t)}{2\gamma _{2}}-\frac{\theta _{2}\mu
_{3}k_{3}^{2}(t)}{2\gamma _{3}}\nonumber\\
&&+\varepsilon _{2}k_{2}(t)+\varepsilon
_{3}k_{3}(t)+C_{0}
\end{eqnarray*}
This means that
\begin{eqnarray*}
V(t)\leq -\frac{
b_{2}\mu _{2}k_{2}^{2}(t)}{2\gamma _{2}}-\frac{\theta _{2}\mu
_{3}k_{3}^{2}(t)}{2\gamma _{3}}+\varepsilon _{2}k_{2}(t)+\varepsilon
_{3}k_{3}(t)+C_{0}
\end{eqnarray*}
and
\begin{eqnarray*}
\varepsilon _{1}\int_{0}\nolimits^{t}z_{1}^{2}(\tau )\mathrm{d}\tau &\leq& -\frac{
b_{2}\mu _{2}k_{2}^{2}(t)}{2\gamma _{2}}-\frac{\theta _{2}\mu
_{3}k_{3}^{2}(t)}{2\gamma _{3}}+\varepsilon _{2}k_{2}(t)\nonumber\\
&&+\varepsilon_{3}k_{3}(t)+C_{0}
\end{eqnarray*}
Therefore, similar to the the analysis given in Section 3.2, we can firstly use the former inequality to prove that
all signals in the resulting closed-loop system are bounded on $[0,\infty )$, and
$\lim_{t\rightarrow \infty }x_i(t)=0$, $\lim_{t\rightarrow \infty }k_{i}(t)=
k_{i\infty}\in \Re _{>0}$, $i=2,3$. Then, from the latter inequality, we have that
$\int_{0}\nolimits^{t}z_{1}^{2}(\tau )\mathrm{d}\tau$ is bounded on $[0,\infty )$. Since
$\int_{0}\nolimits^{t}z_{1}^{2}(\tau )\mathrm{d}\tau$ is a monotonic increasing function,
we have that $\lim_{t\rightarrow\infty}\int_{0}\nolimits^{t}z_{1}^{2}(\tau )\mathrm{d}\tau$.
So we can further use Barbalat's lemma to prove that $\lim_{t\rightarrow \infty }x_1(t)=0$.

The structure parameters of some missile (Please see \cite{Hou2013}) are $s=0.42$ $\mathrm{m}^{2}$, $L=0.68$
$\mathrm{m}$, $J_{x}=100$ $\mathrm{kg}\cdot \mathrm{m}^{2}$, $\tau_a=0.01$, and its aerodynamic
parameter $m_{x}^{\delta _{x}}(\lambda \delta _{x})$ at the operating point
with speed $200$ $\mathrm{m}/\mathrm{s}$ and height $5$ $\mathrm{km}$ ($\rho =0.7361$
$\mathrm{kg}/\mathrm{m}^{3}$) satisfies $m\leq m_{x}^{\delta _{x}}(\lambda \delta
_{x})\leq M$, where $m$ and $M$ are two unknown postive constants. Hence, we could choose $\xi=1$. For
numerical simulation, we set the initial states as $\gamma (0)=10^{\circ }$,
$\omega _{x}(0)=0$ $^{\circ}/\mathrm{s}$, $\delta _{x}(0)=0^{\circ}$, and
for $i=2,3$, $\mu_i=0.5$, $\gamma_i=k_{i0}=0.1$, $k_1=5$, $\epsilon=0.1$,
$m_{x}^{\delta _{x}}(\lambda \delta _{x})=2.12(1+0.2\sin t)$, $V=200(1+0.1\cos 2t)$.
The simulation results are
given in  Fig. \ref{fig2}. All of the state variables converge to zero asymptotically.

\begin{figure}[!htb]
  \centering
  \includegraphics[width=1\hsize]{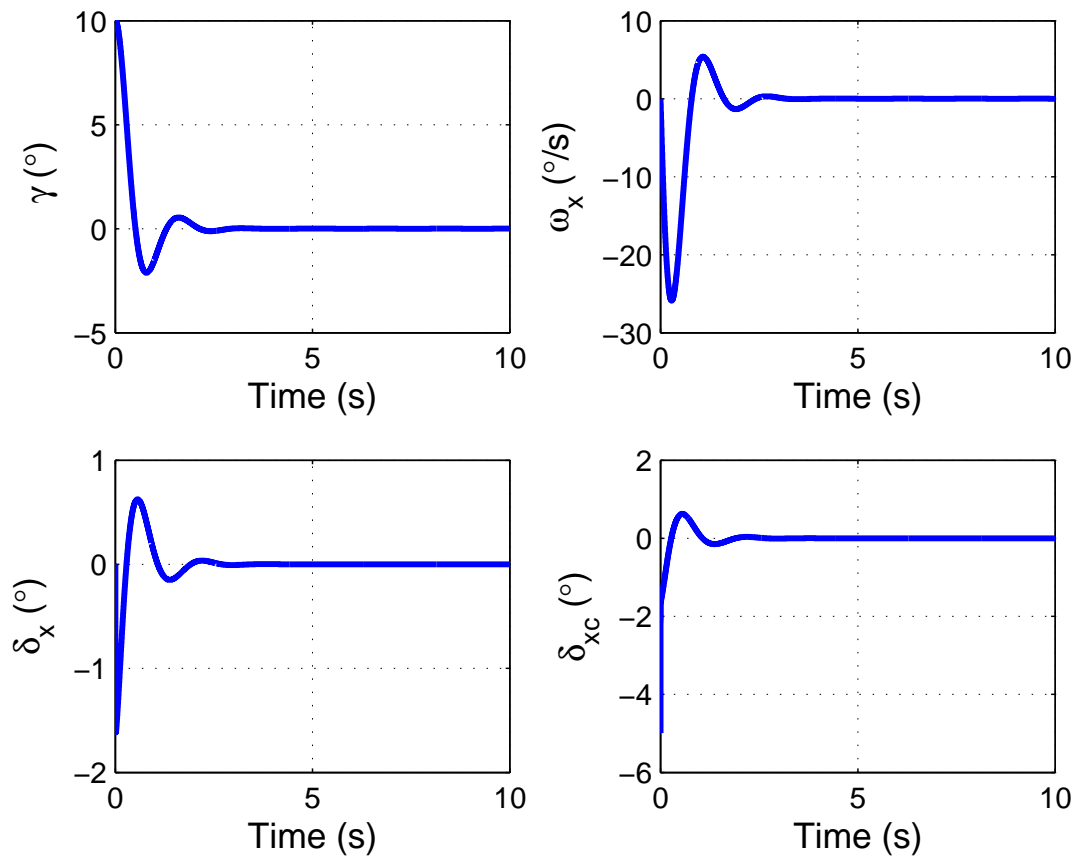}
  \caption{Simulation results for the realistic example}
  \label{fig2}
\end{figure}

The simulation results of the above-mentioned numerical and realistic examples verify the correctness and the effectiveness of the proposed method.

\section{Conclusion}
This paper considers the globally asymptotic state stabilization problem for
pure-feedback systems in the pseudo-affine form and with non-linearly parameterised uncertainties.
Based on the parameter separation technique, a novel adaptive backstepping controller is designed
by utilizing the high gain idea. The proposed controller could guarantee that
all the closed-loop signals are bounded for any initial system condition, and that the state is globally asymptotically stabilized.
A numerical and a realistic examples are given to show the correctness and effectiveness of the proposed approach.

\section{Acknowledgement}
This work was supported by National Natural Science Foundation of China (No.61203125,61503100),
China Postdoctoral Science Foundation (No. 2014M550189), Heilongjiang Postdoctoral Fund (No. LBHZ13076)
and the Fundamental Research Funds for the Central Universities (No. HIT.IBRSEM.A. 201402).

\end{document}